\documentclass[letterpaper,10pt,conference]{ieeeconf}

\IEEEoverridecommandlockouts
\usepackage[T1]{fontenc}
\usepackage{amsmath,amssymb,mathtools,bm}

\usepackage{amsthm}
\usepackage{graphicx}
\graphicspath{{figures/}}
\usepackage[caption=false,font=footnotesize]{subfig}
\usepackage{booktabs}
\usepackage{microtype}
\usepackage{balance}
\usepackage{comment}
\usepackage{hyperref}
\hypersetup{hidelinks}

\newtheorem{assumption}{Assumption}
\newtheorem{lemma}{Lemma}
\newtheorem{definition}{Definition}
\newtheorem{theorem}{Theorem}
\newtheorem{corollary}{Corollary}
\newtheorem{proposition}{Proposition}

\DeclareMathOperator*{\argmin}{arg\,min}
\newcommand{\R}{\mathbb R}
\newcommand{\C}{\mathcal C}
\newcommand{\U}{\mathcal U}
\newcommand{\Hh}{\mathcal H}

\title{History-Aware Control Barrier Functionals with Nonsingular Fading-Memory Kernels}

\author{Ege~C.~Altunkaya, Esra Demir, Murad Abu-Khalaf, and \.{I}brahim~\"{O}zkol%
\thanks{E. C. Altunkaya, E. Demir, and \.{I}. \"{O}zkol are with the Aviation Institute, Istanbul Technical University, 34469 Istanbul, T\"{u}rkiye. M. Abu-Khalaf is with Turkish Aerospace, Istanbul, T\"{u}rkiye. Corresponding author: Ege C. Altunkaya (e-mail: altunkaya16@itu.edu.tr).}}

\begin{document}

\maketitle

\begin{abstract}
Safety-critical protection often depends not only on the state and physical limits of the system, but on how that state evolved over time. Here, the safety margin measures the remaining allowance before a physical limit is reached. In this context, the objective is to enforce both physical safety and history-dependent retention of this margin for integer-order control-affine systems. To this end, changes in this margin are converted into memory through fading-memory kernels with finite nonnegative weights. A history-aware control barrier functional (HA-CBFal) is constructed explicitly, with nonnegativity equivalent to retaining a prescribed fraction of this margin's weighted historical value. This condition is enforced by an affine input constraint and proved to imply physical safety under stated kernel, initialization, and feasibility conditions, without finite-dimensional memory realization. In 36 calibration cases on the ADMIRE flight-control benchmark, mean per-case tracking-RMSE reductions of 27.2\% and 7.0\% are obtained relative to calibrated state-only CBF and moving-average control barrier functional (MA-CBFal), respectively, with both constraints preserved to numerical tolerance. In a shorter-recovery maneuver outside calibration with frozen gains, both constraints are preserved to tolerance by HA-CBFal and state-only CBF, but the retention requirement is violated despite MA-CBFal nonnegativity.
\end{abstract}

\begin{keywords}
Control barrier functions, control barrier functionals, fading memory, convolutional memory, history-dependent safety.
\end{keywords}

\setlength{\abovedisplayskip}{6pt plus 2pt minus 2pt}
\setlength{\belowdisplayskip}{6pt plus 2pt minus 2pt}

\section{Introduction}
\label{sec:introduction}

Physical operating limits must be respected during task execution. For some
protection tasks, the preceding operation must also be considered. For
example, motor restart permission is conditioned on thermal loading and
cooling \cite{abb2015umc}, and strain-history effects are retained in
viscoelastic stresses \cite{jang2023viscoelastic}. In flight-envelope
protection at high incidence, adequate nose-down recovery capability is required,
while aerodynamic response can be influenced by recent pitching motion
\cite{lackey1992pitchmargin,brandon1991dynamicstall}. Thus, even with a fixed
physical limit, additional protection may be prescribed according to the
preceding evolution of physical quantities, such as temperature, strain, or
incidence, termed \emph{history}.

In safety-critical control, the satisfaction of physical limits is formalized
as safe-set forward invariance. To this end, the requirement is translated
through control barrier functions (CBFs) into input constraints enforced by a
safety filter around a nominal controller \cite{ames2017cbfqp,ames2019cbf}.
A state-only CBF determines the certified input set from the plant state. Even when the current plant state and control input are sufficient to determine the next state, enforcing safety can further benefit from leveraging how the plant state evolved over time.

To isolate the role of history, the safety margin $h$ measures the remaining
allowance to the physical limit: $h=0$ marks the limit and $h<0$ its violation.
The admissible-input set and preceding safety-margin history are denoted by
$\U$ and $\Hh_T$, respectively.
Two admissible trajectories are considered, with the same checkpoint
$x^\star$ and safety margin reached at time $T$,
\begin{equation}
x_{\rm A}(T)=x_{\rm B}(T)=x^\star,\qquad
h_{\rm A}(T)=h_{\rm B}(T)=h^\star .
\label{eq:same_terminal_barrier}
\end{equation}
Their preceding safety-margin evolutions need not be identical. In
Fig.~\ref{fig:history_geometry}(a), the recent
evolution associated with $A$ is taken to be more adverse. A protection
policy may be specified by assigning more restrictive permitted actions
after adverse evolution, as illustrated by the ordering
\begin{equation}
\U_{\rm H}(x^\star,\Hh_T^{\rm A})
\subsetneq
\U_{\rm H}(x^\star,\Hh_T^{\rm B})
\subseteq\U .
\label{eq:generic_authority_ordering}
\end{equation}
A policy is termed \emph{history-aware} when different permitted actions
are assigned to different histories reaching the same plant state.
Both sets cannot be reproduced at the same $x^\star$ by a history-blind
mapping $\U_{\rm s}(x)$. A common history-blind set must be contained in
their intersection and is therefore more restrictive for history $B$.
A control constraint must therefore be constructed to enforce the prescribed
history dependence while preserving the physical safety limit.

\begin{figure}[!t]
 \centering
 \includegraphics[width=\columnwidth]{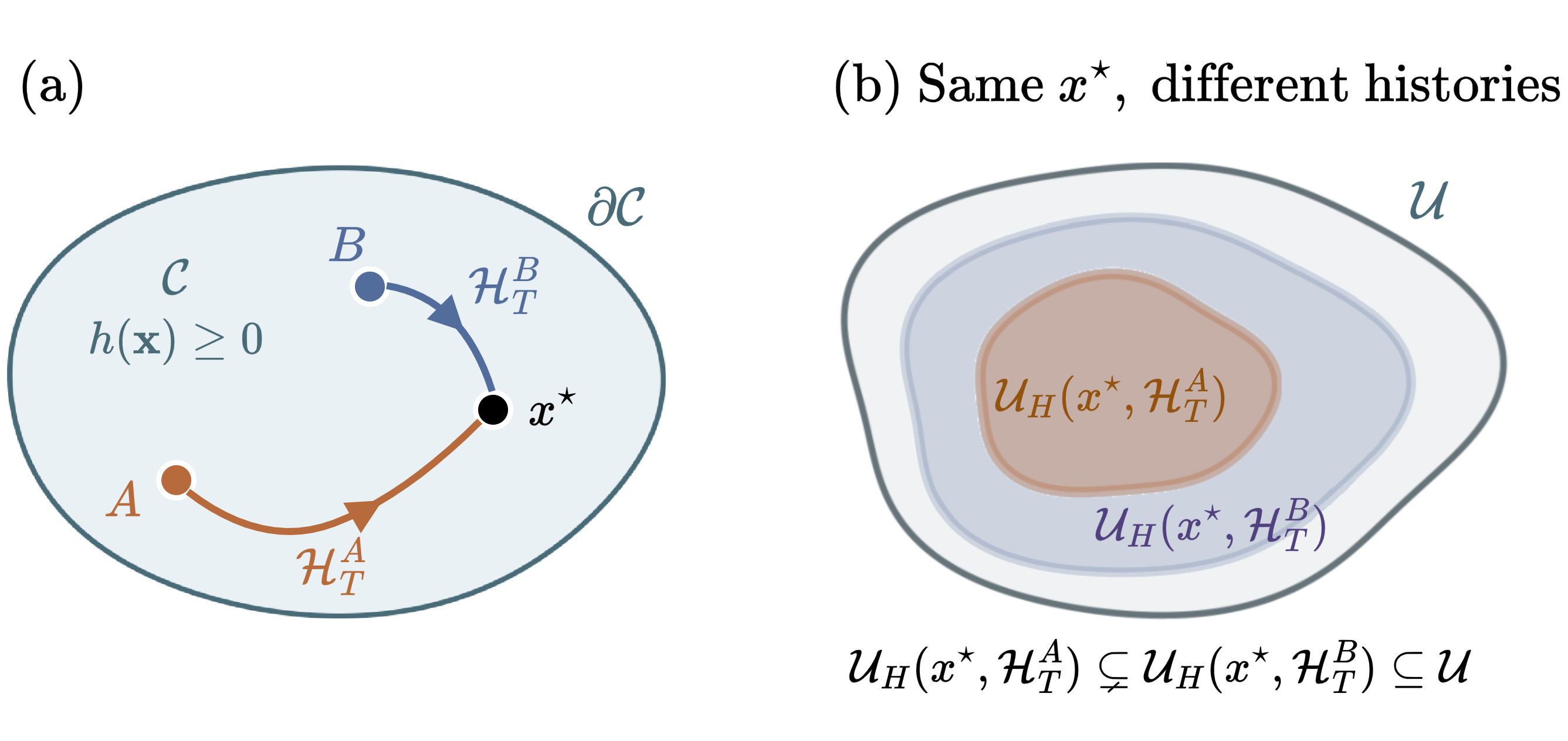}
 \caption{History-conditioned control authority. (a) Distinct histories reach
 the same state $x^\star$ and margin $h^\star$, with different retained
 memories. Path geometry is illustrative; path length is not the memory
 criterion. (b) Different admissible-control sets may be assigned at that
 state by a history-aware protection policy.}
 \label{fig:history_geometry}
\end{figure}

Related uses of history are found in delay-system safety analysis. The state
segment over the delay interval is needed to determine subsequent evolution,
and safety certificates have been developed through Razumikhin and Krasovskii
approaches \cite{prajna2005delay,ren2021razumikhin,ren2022safe}. Similarly,
past evolution is embedded in fractional-order plant dynamics through the
fractional derivative, for which barrier conditions have been developed
\cite{zhu2023first,zhu2025caputo}. In these settings, historical dependence
in the plant dynamics must be accommodated in the safety analysis. In the
example above, history-dependent restrictions are instead prescribed by the
protection policy.

History is also admitted into the safety specification itself through control
barrier functionals (CBFals): sets defined by current, delayed, or
distributed-state quantities can be rendered invariant
\cite{orosz2019safety,kiss2023cbf}. Within this framework, the functional
must be constructed to encode the prescribed protection policy. Physical
safety must then be established for that construction: a bound imposed on a
historical average need not be respected by the current value.

In this study, physical safety and a prescribed history-dependent reserve
are considered jointly for an integer-order, delay-free plant. The physical
limit is expressed through a safety margin. To distinguish recent depletion
from recovery, past decreases and increases of this margin are accumulated
into a quantity termed \emph{memory}. This accumulation is performed through
a nonsingular fading-memory kernel: each margin change is assigned a
nonnegative weight, finite at zero age and nonincreasing toward zero with
age. The weighted historical value of the same safety margin is then obtained
from this memory. A \emph{reserve policy} requires the current safety margin
to retain a prescribed fraction of that value. Both nonnegativity of the safety margin
and satisfaction of this reserve requirement are imposed as mandatory
constraints.

Against this background, the following research question is posed:
how can a prescribed fading-memory reserve requirement be systematically converted
into a control constraint that also certifies physical safety?
To this end, a history-aware CBFal (HA-CBFal) and its affine input constraint are
constructed directly from the prescribed safety margin, fading kernel, and
reserve fraction. Auxiliary-barrier nonnegativity is made equivalent to
reserve satisfaction and enforced through standard CBFal conditions. Under
stated assumptions, physical safety is established across the admissible
kernel family without a separate implication proof for each kernel choice.
The contributions are:
\begin{itemize}
\item An explicit certificate construction is provided for prescribed
fading-memory reserve requirements. Joint reserve and physical-safety preservation is
proved through a positive Volterra resolvent under explicit kernel,
initialization, and feasible-enforcement conditions.
\item Certified input sets at a fixed state are characterized through a
scalar history shift. Their ordering and exact worst-history intersection
are established, identifying the greatest history-blind set satisfying the
same certificate over a prescribed history family.
\item In ADMIRE \cite{forssell2005admire}, tracking benefits and the distinction between functional
preservation and reserve satisfaction are demonstrated against CBF/CBFal
comparators calibrated under both requirements, using Gaussian memory with
no exact finite-dimensional linear time-invariant (LTI) realization.
\end{itemize}

\section{Preliminaries}
\label{sec:preliminaries}

\subsection{Control Barrier Functions}
A state domain $\mathcal D\subset\R^n$ and a closed convex
admissible-input set $\U\subset\R^m$ are considered for the
control-affine system
\begin{equation}
\dot x=f(x)+g(x)u,\qquad x\in\mathcal D,\quad u\in\U,
\label{eq:plant}
\end{equation}
where $f$ and $g$ are locally Lipschitz. Controller memory is excluded from the Markovian plant state $x$. The physical safe set is defined by $h\in C^1(\mathcal D,\R)$ as
\begin{equation}
\C\triangleq\{x\in\mathcal D:h(x)\geq0\},
\label{eq:physical_safe_set}
\end{equation}
with $\dot h=L_fh+L_gh\,u$ along \eqref{eq:plant}. For the
relative-degree-one setting considered, the standard linear CBF condition
is
\begin{equation}
\sup_{u\in\U}
\left[L_fh(x)+L_gh(x)u+kh(x)\right]\geq0,
\qquad k>0.
\label{eq:classical_cbf_definition}
\end{equation}
Under any admissible feedback satisfying the corresponding pointwise
inequality, $h(t)\geq e^{-kt}h(0)$ is obtained, and $\C$ is rendered forward
invariant under the usual closed-loop existence assumptions \cite{ames2017cbfqp}.
A certified-input map $\U_{\rm s}(x)$ is history-blind if it depends only
on $x$, independently of retained margin history.

\subsection{Fading-Memory Kernels}
For age $s=t-\tau\geq0$, the contribution generated at time $\tau$
is weighted by a causal kernel $\kappa(s)$. A continuous nonsingular
fading weight is represented by $\kappa\in C^0(\R_{\geq0};\R_{\geq0})$
with $\kappa(0)=\kappa_0\in(0,\infty)$ and
$\lim_{s\to\infty}\kappa(s)=0$.
Choose the kernel's shape and decay time for the desired protection horizon,
subject to Assumption~\ref{ass:kernel} and input feasibility.

\subsection{Convolutional Memory}
For causal scalar signals, convolution is denoted by
$(a*b)(t)\triangleq\int_0^t a(t-\tau)b(\tau)\,d\tau$. For an absolutely
continuous $y:[0,T]\to\R$, the convolutional memory operator generated by
the fading kernel $\kappa$ of Section~II-B is defined as \cite{sabatier2020continuous}
\begin{equation}
\bigl(\mathfrak D_{\kappa}y\bigr)(t)
\triangleq
(\kappa*\dot y)(t)
=
\int_0^t\kappa(t-\tau)\dot y(\tau)\,d\tau .
\label{eq:general_memory_operator}
\end{equation}
Past signal changes are weighted by the kernel. Only the kernel conditions in
Assumption~\ref{ass:kernel} are required; fractional kernels are optional
specializations. For the physical barrier, the memory is denoted by
$q_\kappa(t)\triangleq\bigl(\mathfrak D_\kappa h\bigr)(t)$, with
$q_\kappa(0)=\bigl(\mathfrak D_\kappa h\bigr)(0)=0$ at the memory origin, and
its retained history is denoted by $\Hh_t\triangleq h|_{[0,t]}$.

\section{Main Result}
\label{sec:main_result}

An input-affine constraint with a history-dependent offset certifies
physical safety without finite-dimensional memory realization.

\begin{assumption}[Admissible fading-memory kernel]
\label{ass:kernel}
Local absolute continuity is assumed for
$\kappa:\R_{\geq0}\rightarrow\R_{\geq0}$, with $\kappa(0)=\kappa_0\in(0,\infty)$,
$\dot\kappa(t)\leq0$ for almost every (a.e.) $t\geq0$, and
$\lim_{t\to\infty}\kappa(t)=0$. The weight function is defined by
$\omega(t)\triangleq-\dot\kappa(t)$. Then
$\omega\in L^1_{\rm loc}(\R_{\geq0})$, $\omega\geq0$ a.e., and, for every
$T\geq0$, $\int_0^T\omega(s)\,ds=\kappa_0-\kappa(T)\leq\kappa_0$.
Consequently, $\omega\in L^1(\R_{\geq0})$ and
$\|\omega\|_1=\kappa_0$.
\end{assumption}

The operator \eqref{eq:general_memory_operator} is henceforth evaluated
under Assumption~\ref{ass:kernel}, with the memory horizon measured
from its lower terminal $0$. Here, $h(t)\equiv h(x(t))$ denotes the safety margin. For
$\gamma>0$, the auxiliary history-dependent barrier is defined as
\begin{equation}
\psi_\kappa(t)
\triangleq
\bigl(\mathfrak D_\kappa h\bigr)(t)+\gamma h(t)
=
q_\kappa(t)+\gamma h(t).
\label{eq:general_auxiliary_barrier}
\end{equation}

For $q_\kappa$ defined after \eqref{eq:general_memory_operator}, local absolute continuity gives
$\dot q_\kappa=\kappa_0\dot h+\dot\kappa*\dot h$ a.e.
Thus, along \eqref{eq:plant},
\begin{align}
\dot\psi_\kappa
={}&
(\kappa_0+\gamma)
\bigl(L_fh(x)+L_gh(x)u\bigr)
\notag\\
&+
\int_0^t
\dot\kappa(t-\tau)\dot h(\tau)\,d\tau ,
\quad\text{a.e.}
\label{eq:general_auxiliary_derivative}
\end{align}

\begin{definition}[Generalized-kernel history-aware CBFal]
\label{def:general_hacbf}
Under Assumption~\ref{ass:kernel}, the auxiliary barrier
$\psi_\kappa$ in \eqref{eq:general_auxiliary_barrier} is a generalized-kernel
history-aware control barrier functional for $h$ on the considered state-history
domain if there exists $\beta>0$ such that, at each state-history pair,
\begin{equation}
\exists u\in\U:\quad Q_\kappa(x,\Hh_t,u)\geq0,
\label{eq:general_hacbf_definition}
\end{equation}
where
\begin{align}
Q_\kappa(x,\Hh_t,u)
\triangleq{}&
(\kappa_0+\gamma)
\bigl(L_fh(x)+L_gh(x)u\bigr)
\notag\\
&+
\int_0^t
\dot\kappa(t-\tau)\dot h(\tau)\,d\tau
+\beta\psi_\kappa(t).
\label{eq:general_hacbf_row}
\end{align}
Equivalently,
$Q_\kappa=\dot\psi_\kappa+\beta\psi_\kappa$.
\end{definition}

\begin{lemma}[Positive $L^1$ resolvent]
\label{lem:positive_volterra}
A weight $\omega\in L^1(\R_{\geq0})$ with $\omega\geq0$ a.e. is
considered, and $\lambda\geq0$ is chosen such that
$\rho\triangleq\lambda\|\omega\|_1<1$. For a fixed kernel $\omega$, the Volterra operator is defined by
$(\mathcal V_\omega\varphi)(t)\triangleq(\omega*\varphi)(t)$, with
$\mathcal V_\omega^0\triangleq\mathcal I$. Convolution powers and the
resolvent are defined by $\omega^{(*1)}\triangleq\omega$,
$\omega^{(*(j+1))}\triangleq\omega*\omega^{(*j)}$, and
\begin{equation}
r_\lambda
\triangleq
\sum_{j=1}^{\infty}\lambda^j\omega^{(*j)} .
\label{eq:volterra_resolvent_kernel}
\end{equation}
Then $r_\lambda\in L^1(\R_{\geq0})$, $r_\lambda\geq0$ a.e., and
$\|r_\lambda\|_1\leq\rho/(1-\rho)$. Moreover, for every
$f\in L^\infty_{\rm loc}(\R_{\geq0})$, on every finite interval the
second-kind Volterra equation $y=f+\lambda\mathcal V_\omega y$ has the unique
bounded solution $y=f+r_\lambda*f$. Hence $f\geq0$ implies $y\geq0$.
\end{lemma}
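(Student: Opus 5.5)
The plan is a Neumann-series argument in $L^1$, followed by a resolvent identity and a Gr\"onwall-type uniqueness step.

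\textbf{Step 1 (the resolvent kernel).} By Young's inequality for causal convolutions on $\R_{\geq0}$, $\|a*b\|_1\leq\|a\|_1\|b\|_1$ for $a,b\in L^1(\R_{\geq0})$. Induction then gives $\omega^{(*j)}\in L^1$ with $\|\omega^{(*j)}\|_1\leq\|\omega\|_1^j$. Since $\omega\geq0$ a.e., each convolution power is also nonnegative a.e. Hence the partial sums $\sum_{j=1}^N\lambda^j\omega^{(*j)}$ are nondecreasing and nonnegative. Their $L^1$ norms are bounded by $\sum_{j\geq1}\rho^j=\rho/(1-\rho)$. By monotone convergence, or equivalently absolute convergence in the Banach space $L^1$, the series converges to some $r_\lambda\in L^1$ with $r_\lambda\geq0$ a.e. and $\|r_\lambda\|_1\leq\rho/(1-\rho)$.

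\textbf{Step 2 (the resolvent identity).} Using continuity of convolution on $L^1$, I would convolve the series termwise to get
\begin{equation}
r_\lambda=\lambda\omega+\lambda\,\omega*r_\lambda=\lambda\omega+\lambda\,r_\lambda*\omega .
\end{equation}
This holds because $\lambda\omega*\sum_{j\ge1}\lambda^j\omega^{(*j)}=\sum_{j\ge2}\lambda^j\omega^{(*j)}$, and convolution on $\R_{\geq0}$ is commutative.

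\textbf{Step 3 (existence).} Fix a finite interval $[0,T]$ and $f\in L^\infty(0,T)$. Since $r_\lambda\in L^1$, the function $r_\lambda*f$ lies in $L^\infty(0,T)$ with $\|r_\lambda*f\|_\infty\leq\|r_\lambda\|_1\|f\|_\infty$. So $y\triangleq f+r_\lambda*f$ is bounded. Substituting and using associativity of causal convolution (Fubini, justified since $|\omega|*|r_\lambda|*|f|$ is finite), I get
\begin{equation}
\lambda\omega*y=\lambda\omega*f+(\lambda\omega*r_\lambda)*f=\lambda\omega*f+(r_\lambda-\lambda\omega)*f=r_\lambda*f .
\end{equation}
Hence $f+\lambda\mathcal V_\omega y=y$.

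\textbf{Step 4 (uniqueness).} Let $z$ be the difference of two bounded solutions, so $z=\lambda\omega*z$ on $[0,T]$. Then
\begin{equation}
\|z\|_{L^\infty(0,T)}\leq\lambda\|\omega\|_{L^1(0,T)}\|z\|_{L^\infty(0,T)}\leq\rho\|z\|_{L^\infty(0,T)} .
\end{equation}
Since $\rho<1$, this forces $z=0$. This is the contraction-mapping argument on $L^\infty(0,T)$, which also independently yields existence.

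\textbf{Step 5 (positivity).} Positivity is then immediate: $f\geq0$ and $r_\lambda\geq0$ give $r_\lambda*f\geq0$, hence $y\geq0$.

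\textbf{Main obstacle.} The only delicate point is measure-theoretic. The convolutions must be defined a.e. and be associative. The series must be summed in a norm for which convolution with $f\in L^\infty_{\rm loc}$ is continuous. I would handle this by working throughout with nonnegative functions where Tonelli applies, and by treating $f$ via $f=f^+-f^-$. The smallness $\rho<1$ is what makes the global $L^1$ series converge; on a finite interval it is also what makes the map a contraction without any weighted norm.
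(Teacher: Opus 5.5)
Your proposal is correct and takes essentially the same route as the paper. The paper's proof only cites the successive-approximation (Neumann-series) construction and resolvent formula of Gripenberg--Londen--Staffans with $k=-\lambda\omega$, and reads positivity and the norm bound off the nonnegative convolution series; you write that argument out in full, with the one cosmetic difference that you use $\rho<1$ to get uniqueness from a direct contraction on $L^\infty(0,T)$, whereas the cited theorems give finite-interval existence and uniqueness for any $L^1_{\rm loc}$ kernel without a smallness condition.
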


\begin{proof}
    The proof is obtained from the successive-approximation construction and
    resolvent formula in \cite[Sec.~2.3, Thms.~3.1 and~3.5, pp.~42--44]{GripenbergLondenStaffans1990},
    with $k=-\lambda\omega$; the stated positivity and norm bound are obtained
    from the nonnegative convolution series.
\end{proof}

\begin{theorem}[Main result: auxiliary-to-physical invariance]
\label{thm:physical_transfer}
Under Assumption~\ref{ass:kernel}, $\gamma>0$ and $h(x(0))\geq0$ are
assumed. Along a complete absolutely continuous closed-loop solution of
\eqref{eq:plant}, the applied input is assumed to satisfy
\begin{equation}
Q_\kappa(x(t),\Hh_t,u(t))\geq0
\quad\text{a.e.}
\label{eq:theorem_hacbf_condition}
\end{equation}
for some $\beta>0$. Then $\psi_\kappa(t)\geq0$ and $h(x(t))\geq0$, $t\geq0.$
Consequently, the physical safe set $\C$ in
\eqref{eq:physical_safe_set} is forward invariant.
\end{theorem}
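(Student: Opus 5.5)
The proof has two stages. First, nonnegativity of the auxiliary barrier $\psi_\kappa$ follows from a scalar comparison argument. Second, this nonnegativity is transferred to $h$ by rewriting $\psi_\kappa\geq0$ as a second-kind Volterra equation for $h$ with a nonnegative forcing term, to which Lemma~\ref{lem:positive_volterra} applies.

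\emph{Step 1 (auxiliary invariance).} By Definition~\ref{def:general_hacbf}, $Q_\kappa=\dot\psi_\kappa+\beta\psi_\kappa$ a.e. Then \eqref{eq:theorem_hacbf_condition} gives $\frac{d}{dt}\bigl(e^{\beta t}\psi_\kappa(t)\bigr)\geq0$ a.e. I will first check that $\psi_\kappa$ is locally absolutely continuous. This holds because $h(x(t))$ is absolutely continuous along the solution and $q_\kappa=\kappa*\dot h$ is absolutely continuous by the derivative formula preceding \eqref{eq:general_auxiliary_derivative}. Integrating then yields $\psi_\kappa(t)\geq e^{-\beta t}\psi_\kappa(0)$. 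Since $q_\kappa(0)=0$, we have $\psi_\kappa(0)=\gamma h(x(0))\geq0$, hence $\psi_\kappa(t)\geq0$ for all $t\geq0$.

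\emph{Step 2 (integration by parts).} Fix $t$. The map $\tau\mapsto\kappa(t-\tau)h(\tau)$ is a product of absolutely continuous functions on $[0,t]$, so it is absolutely continuous. Its derivative is $\omega(t-\tau)h(\tau)+\kappa(t-\tau)\dot h(\tau)$ a.e. Integrating over $[0,t]$ gives
\begin{equation*}
q_\kappa(t)=\kappa_0h(t)-\kappa(t)h(0)-(\omega*h)(t).
\end{equation*}
Substituting into \eqref{eq:general_auxiliary_barrier} and dividing by $\kappa_0+\gamma$ gives the identity $h=f+\lambda\mathcal V_\omega h$, where
\begin{equation*}
\lambda\triangleq\frac{1}{\kappa_0+\gamma},\qquad
f(t)\triangleq\frac{\psi_\kappa(t)+\kappa(t)h(0)}{\kappa_0+\gamma}.
\end{equation*}

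\emph{Step 3 (positivity via the resolvent).} By Assumption~\ref{ass:kernel}, $\omega\geq0$ a.e. and $\|\omega\|_1=\kappa_0$. Hence $\rho=\lambda\|\omega\|_1=\kappa_0/(\kappa_0+\gamma)<1$, because $\gamma>0$. The forcing $f$ is continuous, hence in $L^\infty_{\rm loc}$. It is nonnegative by Step 1 together with $\kappa\geq0$ and $h(0)\geq0$. The function $h(x(\cdot))$ is continuous, so it is bounded on each finite interval and solves the Volterra equation there. By the uniqueness part of Lemma~\ref{lem:positive_volterra}, $h=f+r_\lambda*f$ on every finite interval. Since $r_\lambda\geq0$ a.e., this gives $h(x(t))\geq0$ for all $t$. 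Forward invariance of $\C$ follows because the argument applies to every complete closed-loop solution starting in $\C$.

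The main technical point I expect is Step 2: justifying the integration by parts and the a.e.\ derivative of $q_\kappa$ with only local absolute continuity of $\kappa$. The identification of $h$ with the unique bounded solution in Step 3 is also delicate, because the lemma's uniqueness is among bounded solutions. I would handle both using the standard product rule for absolutely continuous functions on compact intervals, together with continuity of $h$ along the solution.
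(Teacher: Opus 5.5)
Your proposal is correct and follows the paper's proof essentially step for step: the integrating-factor argument gives $\psi_\kappa\geq0$, integration by parts gives the Volterra equation $h=f+\lambda_\gamma\mathcal V_\omega h$ with $f\geq0$, and the positive resolvent then yields $h\geq0$. The only difference is cosmetic: instead of citing the lemma's uniqueness, the paper identifies $h$ with $f+r_{\lambda_\gamma}*f$ by bounding the Neumann remainder $\lambda_\gamma^{N+1}\mathcal V_\omega^{\,N+1}h$ by $\theta^{N+1}\|h\|_{\infty,[0,T]}$, which directly settles the point you flagged as delicate.
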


\begin{proof}
\emph{Step 1: Auxiliary-barrier nonnegativity.} At the memory origin,
$\bigl(\mathfrak D_\kappa h\bigr)(0)=0$, so
$\psi_\kappa(0)=\gamma h(0)\geq0$. From
\eqref{eq:general_hacbf_row} and \eqref{eq:theorem_hacbf_condition},
$\dot\psi_\kappa+\beta\psi_\kappa\geq0$ a.e.; the integrating factor
$e^{\beta t}$ is therefore applied to obtain
\begin{equation}
\psi_\kappa(t)\geq e^{-\beta t}\psi_\kappa(0)
=\gamma e^{-\beta t}h(0)\geq0.
\label{eq:psi_nonnegative}
\end{equation}

\emph{Step 2: Volterra reformulation.} Since $\kappa$ and $h$ are
absolutely continuous on every finite interval,
\eqref{eq:general_memory_operator} is integrated by parts to obtain
\begin{align}
q_\kappa(t)
&=\left[\kappa(t-\tau)h(\tau)\right]_{\tau=0}^{\tau=t}
 +\int_0^t\dot\kappa(t-\tau)h(\tau)\,d\tau \notag\\
&=\kappa_0h(t)-\kappa(t)h(0)-(\omega*h)(t).
\label{eq:general_integration_by_parts}
\end{align}
Hence
\begin{equation}
(\kappa_0+\gamma)h(t)
=\psi_\kappa(t)+\kappa(t)h(0)+(\mathcal V_\omega h)(t).
\label{eq:physical_identity_with_I}
\end{equation}
Since the sign of $h$ is not yet established,
$\lambda_\gamma\triangleq(\kappa_0+\gamma)^{-1}$ and
$f(t)\triangleq\lambda_\gamma[\psi_\kappa(t)+\kappa(t)h(0)]$ are defined.
By \eqref{eq:psi_nonnegative}, Assumption~\ref{ass:kernel}, and $h(0)\geq0$,
$f\geq0$ is ensured, and \eqref{eq:physical_identity_with_I} is rewritten as
\begin{equation}
h=f+\lambda_\gamma\mathcal V_\omega h.
\label{eq:physical_volterra_equation}
\end{equation}

\emph{Step 3: Positive reconstruction.} Lemma~\ref{lem:positive_volterra} is applied.

The first successive substitutions are written as
\begin{align*}
h&=f+\lambda_\gamma\mathcal V_\omega f
 +\lambda_\gamma^2\mathcal V_\omega^2h\\
 &=f+\lambda_\gamma\mathcal V_\omega f
 +\lambda_\gamma^2\mathcal V_\omega^2f
 +\lambda_\gamma^3\mathcal V_\omega^3h.
\end{align*}
Hence, for every integer $N\geq0$,
\begin{equation}
h=\sum_{j=0}^{N}\lambda_\gamma^j\mathcal V_\omega^{\,j}f
 +\lambda_\gamma^{N+1}\mathcal V_\omega^{\,N+1}h.
\label{eq:finite_N_volterra_expansion}
\end{equation}
The finite sum is the order-$N$ Neumann partial sum; the last term is its
remainder, which still contains the unknown $h$.
Under Assumption~\ref{ass:kernel}, the contraction factor is obtained as
\begin{equation}
\lambda_\gamma\|\omega\|_1
=\frac{\kappa_0}{\kappa_0+\gamma}\triangleq\theta<1.
\label{eq:physical_contraction_bound}
\end{equation}
On every finite $[0,T]$, $h$ is bounded by continuity; by Young's inequality, 
$\bigl\|\lambda_\gamma^{N+1}\mathcal V_\omega^{\,N+1}h\bigr\|_{\infty,[0,T]}
\leq\theta^{N+1}\|h\|_{\infty,[0,T]}\longrightarrow0$. Thus the partial sums converge uniformly to $h$ on every finite interval.
With $\mathcal V_\omega^{\,j}f=\omega^{(*j)}*f$ for $j\geq1$, the limit is written as
\begin{equation}
h=\sum_{j=0}^{\infty}\lambda_\gamma^j\mathcal V_\omega^{\,j}f
 =f+r_{\lambda_\gamma}*f,
\label{eq:neumann_limit_reconstruction}
\end{equation}
where $r_{\lambda_\gamma}\geq0$ by Lemma~\ref{lem:positive_volterra}.
The explicit safety margin is obtained by substituting $f$:
\begin{equation}
\begin{aligned}
h(t)={}&\frac{\psi_\kappa(t)}{\kappa_0+\gamma}
 +\frac{\kappa(t)h(0)}{\kappa_0+\gamma}\\
&+\frac{1}{\kappa_0+\gamma}
 \int_0^t r_{\lambda_\gamma}(t-\tau)\psi_\kappa(\tau)\,d\tau\\
&+\frac{h(0)}{\kappa_0+\gamma}
 \int_0^t r_{\lambda_\gamma}(t-\tau)\kappa(\tau)\,d\tau.
\end{aligned}
\label{eq:explicit_positive_reconstruction}
\end{equation}
Every term is nonnegative because $\psi_\kappa\geq0$, $\kappa\geq0$,
$r_{\lambda_\gamma}\geq0$, and $h(0)\geq0$. Hence $h(x(t))\geq0$ is ensured for all $t\geq0$, and forward invariance
of $\C$ is established.
\end{proof}

\begin{corollary}[Kernel-weighted reserve interpretation]
\label{rem:kernel_policy}
With $\omega=-\dot\kappa$, the weighted historical value of the safety margin is
\begin{equation}
\bar h_\kappa(t)\triangleq
\frac{\kappa(t)h(0)+(\omega*h)(t)}{\kappa_0}.
\label{eq:kernel_historical_margin}
\end{equation}
Nonnegative weights are ensured by Assumption~\ref{ass:kernel}, and
$\kappa(t)+\int_0^t\omega(s)\,ds=\kappa_0$, so $\bar h_\kappa$ is a
normalized weighted historical value of $h$. From \eqref{eq:general_integration_by_parts},
$\mathfrak D_\kappa h=\kappa_0(h-\bar h_\kappa)$ is obtained. With
$\theta\triangleq\kappa_0/(\kappa_0+\gamma)\in(0,1)$ and
the \emph{reserve-policy residual} $g_{\kappa,\theta}\triangleq h-\theta\bar h_\kappa$, the identities are
$\psi_\kappa=(\kappa_0+\gamma)g_{\kappa,\theta}$ and
$Q_\kappa=(\kappa_0+\gamma)(\dot g_{\kappa,\theta}+\beta g_{\kappa,\theta})$.
Thus $g_{\kappa,\theta}\geq0$ expresses the reserve requirement $h\geq\theta\bar h_\kappa$ for each admissible kernel. Conversely, for a prescribed $\theta\in(0,1)$,
$\gamma=\kappa_0(1-\theta)/\theta>0$ is obtained. Moreover, by
\eqref{eq:physical_contraction_bound}, the reserve fraction $\theta$ is also
the Neumann-resolvent contraction factor used in
Theorem~\ref{thm:physical_transfer}. A finite-dimensional realization of the history functional is not
required by the theorem.
Under the hypotheses of Theorem~\ref{thm:physical_transfer},
$g_{\kappa,\theta}\geq0$ and $h\geq0$ therefore hold.
\end{corollary}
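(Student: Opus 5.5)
The corollary is essentially algebra on top of the integration-by-parts identity \eqref{eq:general_integration_by_parts} and Theorem~\ref{thm:physical_transfer}. I would proceed in four short steps.

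\emph{Normalization of $\bar h_\kappa$.} By Assumption~\ref{ass:kernel}, $\omega\geq0$ a.e. and $\kappa(t)\geq0$, with $\int_0^t\omega(s)\,ds=\kappa_0-\kappa(t)$. Hence $\kappa(t)+\int_0^t\omega=\kappa_0$. Therefore $\bar h_\kappa(t)$ is an average of $h$ over $[0,t]$ against the probability measure $\kappa_0^{-1}[\kappa(t)\delta_0+\omega(t-\tau)\,d\tau]$. In particular, $\bar h_\kappa$ equals $h$ when $h$ is constant and lies between $\min h$ and $\max h$ on $[0,t]$. This justifies calling it a normalized weighted historical value.

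\emph{Identity for $q_\kappa$.} Rearranging \eqref{eq:general_integration_by_parts} gives
\[
q_\kappa=\kappa_0h-\bigl[\kappa(t)h(0)+(\omega*h)(t)\bigr]=\kappa_0(h-\bar h_\kappa).
\]
Then
\[
\psi_\kappa=(\kappa_0+\gamma)h-\kappa_0\bar h_\kappa=(\kappa_0+\gamma)\bigl(h-\theta\bar h_\kappa\bigr)=(\kappa_0+\gamma)g_{\kappa,\theta},
\]
with $\theta=\kappa_0/(\kappa_0+\gamma)\in(0,1)$ because $\gamma>0$. Since $\kappa_0+\gamma$ is a positive constant and $\psi_\kappa$ is absolutely continuous, $\dot\psi_\kappa=(\kappa_0+\gamma)\dot g_{\kappa,\theta}$ a.e. Definition~\ref{def:general_hacbf} gives $Q_\kappa=\dot\psi_\kappa+\beta\psi_\kappa$, so $Q_\kappa=(\kappa_0+\gamma)(\dot g_{\kappa,\theta}+\beta g_{\kappa,\theta})$. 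Positivity of $\kappa_0+\gamma$ then yields the equivalence $\psi_\kappa\geq0\iff h\geq\theta\bar h_\kappa$.

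\emph{Converse parametrization.} Solving $\theta=\kappa_0/(\kappa_0+\gamma)$ for $\gamma$ gives $\gamma=\kappa_0(1-\theta)/\theta$. This is positive for $\theta\in(0,1)$ and $\kappa_0>0$, and the map $\gamma\mapsto\theta$ is a bijection from $(0,\infty)$ onto $(0,1)$. The coincidence with the contraction factor is immediate on comparing with \eqref{eq:physical_contraction_bound}, using $\|\omega\|_1=\kappa_0$.

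\emph{Conclusion.} Theorem~\ref{thm:physical_transfer} gives $\psi_\kappa\geq0$ and $h\geq0$, and hence $g_{\kappa,\theta}=\psi_\kappa/(\kappa_0+\gamma)\geq0$. The theorem uses only $\kappa$ through convolution and never a state-space realization of the memory, so no finite-dimensional realization is needed.

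I expect no real obstacle here. The only care needed is the a.e. differentiability of $g_{\kappa,\theta}$, which is inherited from that of $\psi_\kappa$, and the use of $\kappa(t)\geq0$, which follows from monotonicity together with $\kappa\to0$.
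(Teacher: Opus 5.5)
Your proposal is correct and matches the paper's own reasoning, which is embedded in the corollary statement. The paper likewise reads $\mathfrak D_\kappa h=\kappa_0(h-\bar h_\kappa)$ off \eqref{eq:general_integration_by_parts}, rescales $\psi_\kappa$ and $Q_\kappa=\dot\psi_\kappa+\beta\psi_\kappa$ by the positive constant $\kappa_0+\gamma$, inverts $\theta=\kappa_0/(\kappa_0+\gamma)$ and matches it with \eqref{eq:physical_contraction_bound}, and then concludes from Theorem~\ref{thm:physical_transfer}; your probability-measure reading of $\bar h_\kappa$ is only a harmless elaboration of the paper's normalization remark.
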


With $g_{\kappa,\theta}$ as the functional, standard CBFal enforcement is
recovered. For prescribed $h,\kappa,\theta$, the auxiliary barrier is
obtained through $\gamma=\kappa_0(1-\theta)/\theta$.
For fixed $h$, the affine form and input coefficient $L_gh$ are retained
in $Q_\kappa/(\kappa_0+\gamma)$; policy terms are updated analytically.
Under Theorem~\ref{thm:physical_transfer}, physical safety is certified
across admissible kernels without a separate implication proof for each choice.

\begin{proposition}[History-parametrized certified input sets]
\label{prop:history_parametrized_sets}
The scalar history functional is defined as
\begin{equation}
\Lambda(\Hh_t)\triangleq
\int_0^t[\beta\kappa(t-\tau)+\dot\kappa(t-\tau)]\dot h(\tau)\,d\tau .
\label{eq:history_functional}
\end{equation}
For fixed $x$, $a(x)\triangleq(\kappa_0+\gamma)L_gh(x)$ and
$b(x)\triangleq(\kappa_0+\gamma)L_fh(x)+\beta\gamma h(x)$ are defined,
and $\mathfrak H(x)$ is a prescribed nonempty family of margin histories
induced by admissible plant trajectories ending at $x$. Then
$Q_\kappa=a(x)u+b(x)+\Lambda(\Hh)$ and
$\U_{\rm H}(x,\Hh)=\U\cap\{u:a(x)u+b(x)+\Lambda(\Hh)\geq0\}$.
Thus identical certified input sets are obtained for equal values of $\Lambda$, while
$\Lambda(\Hh^{\rm A})\leq\Lambda(\Hh^{\rm B})$ implies
$\U_{\rm H}(x,\Hh^{\rm A})\subseteq\U_{\rm H}(x,\Hh^{\rm B})$. Moreover, if
$\underline\Lambda(x)\triangleq\inf_{\Hh\in\mathfrak H(x)}\Lambda(\Hh)>-\infty$,
then
\begin{equation}
\begin{aligned}
\U_{\rm wc}(x)&\triangleq\bigcap_{\Hh\in\mathfrak H(x)}\U_{\rm H}(x,\Hh)\\
&=\U\cap\{u:a(x)u+b(x)+\underline\Lambda(x)\geq0\}.
\end{aligned}
\label{eq:uniform_history_intersection}
\end{equation}
Hence $\U_{\rm wc}(x)$ is the greatest history-blind input set satisfying the
same certificate for every history in $\mathfrak H(x)$.
\end{proposition}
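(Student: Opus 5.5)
The plan is to first establish the affine decomposition $Q_\kappa=a(x)u+b(x)+\Lambda(\Hh)$ by direct substitution into \eqref{eq:general_hacbf_row}. The only nontrivial piece is the memory term $\beta\psi_\kappa=\beta q_\kappa+\beta\gamma h$. By the definition \eqref{eq:general_memory_operator}, $\beta q_\kappa(t)=\int_0^t\beta\kappa(t-\tau)\dot h(\tau)\,d\tau$. Combining this with the integral $\int_0^t\dot\kappa(t-\tau)\dot h(\tau)\,d\tau$ already present in \eqref{eq:general_hacbf_row} gives exactly $\Lambda(\Hh_t)$ in \eqref{eq:history_functional}. The remaining terms then group as follows:
\begin{itemize}
\item $(\kappa_0+\gamma)L_gh(x)\,u$ gives $a(x)u$;
\item $(\kappa_0+\gamma)L_fh(x)+\beta\gamma h(x)$ gives $b(x)$.
\end{itemize}
Both groups depend on the history only through the current state $x$, since $h(t)=h(x(t))$. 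Integrability of the combined integrand follows from Assumption~\ref{ass:kernel}: $\kappa$ is continuous and $\dot\kappa=-\omega\in L^1_{\rm loc}$, while $\dot h$ is integrable along absolutely continuous trajectories. The characterization $\U_{\rm H}(x,\Hh)=\U\cap\{u:Q_\kappa\geq0\}$ is then just the definition of the certified set read through this identity.

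The ordering claims are immediate from this scalar-shift form. If $\Lambda(\Hh^{\rm A})=\Lambda(\Hh^{\rm B})$, the defining half-spaces coincide. If $\Lambda(\Hh^{\rm A})\leq\Lambda(\Hh^{\rm B})$, take any $u\in\U_{\rm H}(x,\Hh^{\rm A})$. Then
\[
a(x)u+b(x)+\Lambda(\Hh^{\rm B})\geq a(x)u+b(x)+\Lambda(\Hh^{\rm A})\geq0,
\]
so $u\in\U_{\rm H}(x,\Hh^{\rm B})$.

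For the worst-case intersection I would prove two inclusions.

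\emph{Inclusion $\supseteq$.} Let $u\in\U$ with $a(x)u+b(x)+\underline\Lambda(x)\geq0$. For every $\Hh\in\mathfrak H(x)$ we have $\Lambda(\Hh)\geq\underline\Lambda(x)$, so $u$ lies in every $\U_{\rm H}(x,\Hh)$.

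\emph{Inclusion $\subseteq$.} Let $u$ lie in the intersection. Then $a(x)u+b(x)\geq-\Lambda(\Hh)$ for all $\Hh\in\mathfrak H(x)$. Taking the supremum of the right-hand side over $\mathfrak H(x)$ gives $a(x)u+b(x)\geq-\underline\Lambda(x)$. This step uses that $\mathfrak H(x)$ is nonempty and that the infimum is finite, which the hypothesis provides. The infimum need not be attained; only the supremum of the lower bounds is used, so this is harmless.

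The last claim, that $\U_{\rm wc}(x)$ is the greatest history-blind set, is the step I expect to need the most care, mainly in fixing what it asserts. I would formalize it as follows. A history-blind map $\U_{\rm s}(x)$ satisfies the certificate over $\mathfrak H(x)$ if $\U_{\rm s}(x)\subseteq\U_{\rm H}(x,\Hh)$ for every $\Hh\in\mathfrak H(x)$. Any such set is contained in the intersection $\U_{\rm wc}(x)$. Conversely, $\U_{\rm wc}(x)$ depends on the history only through $\underline\Lambda(x)$, which is a function of $x$ alone, so it is itself history-blind and admissible. Maximality therefore follows with respect to set inclusion. I would also remark that $\U_{\rm wc}(x)$ is closed and convex, being the intersection of $\U$ with a half-space. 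It may be empty, since nonemptiness requires feasibility against the worst history and is not asserted by the proposition.
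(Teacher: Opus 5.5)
Your proposal is correct and follows the paper's proof. You collect $\beta q_\kappa$ with the $\dot\kappa$-convolution term to obtain $\Lambda$, read equality and inclusion off the scalar half-space offset, and identify the intersection through $\inf_{\Hh}\Lambda(\Hh)$; your explicit two inclusions and your formalization of ``greatest history-blind set'' only spell out what the paper states in one line. One minor precision in your integrability remark: with $\omega$ only in $L^1_{\rm loc}$ and $\dot h$ only integrable, $\omega*\dot h$ is guaranteed finite merely for a.e.\ $t$, so pointwise finiteness of $\Lambda(\Hh_t)$ should come from locally essentially bounded inputs (hence $\dot h\in L^\infty_{\rm loc}$), which is what the paper assumes for its safety filter.
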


\begin{proof}
The expression for $Q_\kappa$ is obtained by collecting its two
history-convolution terms. Equality and monotone inclusion are
established because only the scalar half-space offset is changed by history. Finally,
$u\in\bigcap_{\Hh}\U_{\rm H}(x,\Hh)$ iff $u\in\U$ and
$a(x)u+b(x)+\Lambda(\Hh)\geq0$ for every $\Hh\in\mathfrak H(x)$, which is
equivalent to $a(x)u+b(x)+\underline\Lambda(x)\geq0$.
\end{proof}

Thus $\Lambda$ is sufficient to determine the instantaneous certified input
set at fixed $x$; a closed scalar realization of the memory evolution
is not asserted. If $a(x)\neq0$, the unconstrained half-space boundaries are
parallel, with separation
$|\Lambda(\Hh^{\rm B})-\Lambda(\Hh^{\rm A})|/\|a(x)\|_2$; strict set inclusion is obtained only when the intervening slab
also intersects $\U$. The ordering in Fig.~\ref{fig:history_geometry} is thereby realized while
$\C$ is left unchanged.

\subsection{History-Aware Safety Filter}
\label{subsec:general_hacbf_qp}

For locally essentially bounded closed-loop inputs, the convolution terms
defining $\Lambda(\Hh_t)$ are continuous in $t$ under
Assumption~\ref{ass:kernel}. For a nominal input $u_{\rm nom}$ and $W\succ0$, the generalized-kernel
history-aware CBFal can be enforced through the minimum-intervention program
\begin{equation}
\begin{aligned}
u^*(x,\Hh_t)
\triangleq
\argmin_{u\in\U}\quad&
\frac{1}{2}\|u-u_{\rm nom}\|_W^2\\
\mathrm{s.\ t.}\quad&
Q_\kappa(x,\Hh_t,u)\geq0,
\end{aligned}
\label{eq:general_hacbf_qp}
\end{equation}
where $Q_\kappa$ is given by \eqref{eq:general_hacbf_row}. Because
$Q_\kappa$ is affine in $u$, \eqref{eq:general_hacbf_qp} is a convex
program, and a QP when $\U$ is polyhedral.

\begin{figure*}[!t]
 \centering
 \includegraphics[width=\textwidth]{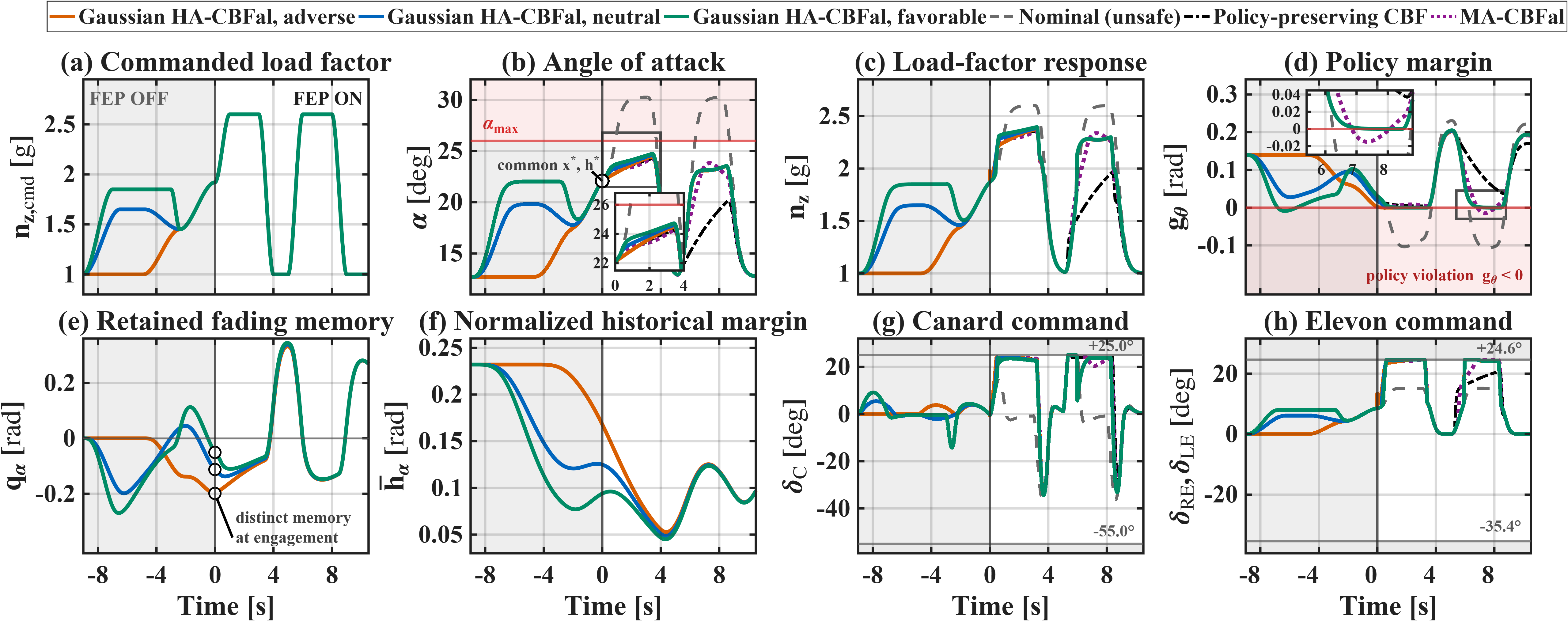}
 \caption{Stitched preparation/activation scenario.
 (a)--(d) Command, angle of attack, load-factor response, and reserve-policy
 residual $g_\theta$ (panel label: policy margin); (e)--(f) retained Gaussian
 memory and weighted historical value $\bar h_\alpha$ (panel label: normalized historical margin); (g)--(h) canard and
 elevon commands. The safety-filter QP is activated at $t=0$: physical states
 coincide within numerical tolerance, while memory differs. For the Gaussian HA-CBFal comparison, the neutral preparation
 is used for PP-CBF and MA-CBFal traces. The second pulse is magnified in the reserve inset. Regions with $g_\theta<0$ are shaded.}
 \label{fig:stitched}
\end{figure*}

\section{Numerical Evaluation: High-Incidence Envelope Protection}
\label{sec:fep_numerical}

\subsection{ADMIRE Model and Nominal Load-Factor Control}
\label{subsec:admire}

Zero rudder and equal elevons are used for the invariant symmetric ADMIRE
subsystem $\dot x=Ax+Bu$ \cite{forssell2005admire,harkegard2003} at Mach
$0.22$ and $3000$~m. The state $x=[\Delta\alpha\;q]^\top$ and input
$u=[\delta_C\;\delta_{RE}\;\delta_{LE}]^\top$ are trim deviations (rad, rad/s), with
\begin{align}
 A&=\begin{bmatrix}
 -0.5432 & 0.9778\\
 2.6221 & -0.5057
 \end{bmatrix},
 \label{eq:admire_A}\\[0.5mm]
 B&=\begin{bmatrix}
 0.0069 & -0.0866 & -0.0866\\
 1.6532 & -1.2735 & -1.2735
 \end{bmatrix}.
 \label{eq:admire_B}
\end{align}
Using $\Delta n_z=(V/g)(q-\Delta\dot\alpha)$ with $V=72.3$~m/s,
the load-factor output and nominal control law are
$n_z=1+[4.004\;0.164]x+[-0.051\;0.638\;0.638]u$ and
$u_{\rm nom}=-K_xx+K_r(n_{z,{\rm cmd}}-1)$.
The stabilising and trim-allocation gains $K_x,K_r$ are fixed offline;
$u_{\rm nom}$ is clipped componentwise to the actuator box $\U$ before
filtering, with only the safety-filter QP solved online.

\subsection{High-Incidence Reserve Policy and Gaussian-Kernel Synthesis}
\label{subsec:policy}

The safety margin $h_\alpha=\alpha_{\max}-\alpha$ measures the remaining
angle-of-attack allowance to $\alpha_{\max}=26^\circ$. Physical safety
$h_\alpha\geq0$ is supplemented by retention of a fraction $\theta_\alpha$
of its weighted historical value (Corollary~\ref{rem:kernel_policy}). The fraction and
fading law are prescribed for controller-level enforcement in two-state ADMIRE.

The Gaussian kernel $\kappa_G(s)=\kappa_0\exp[-(s/\tau_G)^2]$,
$s\geq0$, is used with $\kappa_0=2$,
$\tau_G=3.0$~s, and $\theta_\alpha=0.40$, giving
$\gamma=\kappa_0(1-\theta_\alpha)/\theta_\alpha=3$.
Assumption~\ref{ass:kernel} is satisfied. An exact finite-dimensional LTI realization
with rational transfer $C_m(sI-A_m)^{-1}B_m+D_m$ is excluded by the kernel's
entire, nonzero Laplace transform tending to zero as $s\to+\infty$.
Memory is evaluated directly from retained history.

For memory origin $t_0$,
\mbox{$q_G(t)=\int_{t_0}^{t}\kappa_G(t-\tau)\dot h_\alpha(\tau)\,d\tau$} and
\mbox{$J_G(t)=\int_{t_0}^{t}\dot\kappa_G(t-\tau)\dot h_\alpha(\tau)\,d\tau$},
where $\dot\kappa_G(s)=-2s\kappa_G(s)/\tau_G^2$. The weighted historical
value and reserve-policy residual are
$\bar h_\alpha=h_\alpha-q_G/\kappa_0$ and
$g_\theta=(1-\theta_\alpha)h_\alpha+\theta_\alpha q_G/\kappa_0$.
With $\dot h_\alpha=-A_\alpha x-B_\alpha u$, the implemented constraint is
\begin{align}
Q_{G,\alpha}\triangleq{}&-(\kappa_0+\gamma)B_\alpha u
-(\kappa_0+\gamma)A_\alpha x\notag\\
&+J_G+\beta q_G+\beta\gamma h_\alpha\geq0.
\label{eq:gaussian_row_admire}
\end{align}
With $Q_{G,\alpha}=(\kappa_0+\gamma)(\dot g_\theta+\beta g_\theta)$ and
$\Lambda_G=J_G+\beta q_G$, $\tfrac12\|u-u_{\rm nom}\|_2^2$ is minimized subject to
\eqref{eq:gaussian_row_admire} and the actuator box.

For the policy-preserving CBF (PP-CBF), $\dot h_\alpha+k h_\alpha\geq0$
is enforced by the same filter. With case tracking RMSE $e_s(k)$, the
gain is calibrated on $\mathcal S$ (Section~\ref{subsec:stress}) by
\begin{equation}
\begin{gathered}
k_{\rm pp}^\star\in\argmin_{k\in[0.01,4]}
\frac{1}{|\mathcal S|}\sum_{s\in\mathcal S}e_s(k)\\
\mathrm{s.\ t.}\quad h_{\alpha,s}(t;k)\geq0,\quad
g_{\theta,s}(t;k)\geq0,
\end{gathered}
\label{eq:pp_gain_optimization}
\end{equation}
By scalar search and feasible-side rounding,
$k_{\rm pp}=0.2393$~s$^{-1}$ is obtained.

For the moving-average control barrier functional (MA-CBFal),
the structure in \cite[Ex.~1]{kiss2023cbf} is instantiated with current and
distributed values of the same safety margin:
\[
\Phi(t)=(1-\eta)h_\alpha(t)+\frac{\eta}{T}
\int_{t-T}^{t}h_\alpha(s)\,ds,
\qquad 0<\eta<1.
\]
The inequality $\dot\Phi+\beta_{\rm MA}\Phi\geq0$ is enforced by the same QP,
using $\dot\Phi=(1-\eta)\dot h_\alpha+\eta[h_\alpha(t)-h_\alpha(t-T)]/T$.
Under the same calibration objective and constraints, $\eta=0.55$, $T=3$~s,
and $\beta_{\rm MA}=0.2777$~s$^{-1}$ are obtained by finite search.

The HA-CBFal gain $\beta=3.8341$~s$^{-1}$ is retained from a single-exponential
reference design and fixed before comparator calibration.
All gains are then frozen, the nominal controller is shared, and reserve-policy
residuals are assessed with $5\times10^{-5}$-rad tolerance.

\subsection{History-Prepared Twin-Square Evaluation}
\label{subsec:doublet}

Three physically safe, unfiltered preparations are initialized at
$t_0=-9$~s with $q_G(t_0)=0$. Activation endpoints are matched within
$8.6\times10^{-4}$ degrees in incidence and $5.8\times10^{-5}$~rad/s in
pitch rate; each endpoint and full memory are retained. At activation,
$q_G=\{-0.19919,-0.11270,-0.05067\}$ and
$g_\theta(0)=\{0.00131,0.01861,0.03101\}$~rad for the adverse, neutral,
and favorable histories. The reserve policy and QP are activated at $t=0$.
The activation reconstruction is split to retain a nonnegative preparation
contribution, with auxiliary initialization from $g_\theta(0)\geq0$.
Two $2.6$-g, $2$-s squares with $1$-s smooth transitions are commanded beyond
the protected load-factor capability, separated by a $1$-s release for margin recovery.

Both requirements are preserved to tolerance by PP-CBF and HA-CBFal,
with lower HA-CBFal tracking RMSE (Fig.~\ref{fig:stitched}, Table~\ref{tab:doublet}).
Under MA-CBFal, physical safety and $\Phi\geq0$ are preserved, but the reserve requirement
is violated for all three preparations (neutral minimum: $-0.0149$~rad); it is
therefore not certified by this functional. The 1-s release was selected
from a recovery-interval sweep outside the original calibration set;
no controller is retuned.

\begin{table}[!t]
\caption{One-second-recovery results (neutral preparation for comparator traces).}
\label{tab:doublet}
\centering
\setlength{\tabcolsep}{4pt}
\begin{tabular}{@{}lrrr@{}}
\toprule
Controller & $\alpha_{\rm pk}$ [deg] & $\min g_\theta$ [rad]
& $e_{n_z,{\rm rms}}$ [g]\\
\midrule
Nominal (unsafe ref.) & 30.25 & $-0.1076$ & 0.3406\\
PP-CBF & 24.27 & $+0.0042$ & 0.4871\\
MA-CBFal & 24.29 & $-0.0149$ & 0.3354\\
\midrule
HA-CBFal, adverse & 24.49 & $-1.37\!\times\!10^{-8}$ & 0.3292\\
HA-CBFal, neutral & 24.61 & $1.20\!\times\!10^{-7}$ & 0.3243\\
HA-CBFal, favorable & 24.73 & $2.99\!\times\!10^{-7}$ & 0.3213\\
\bottomrule
\end{tabular}
\end{table}

\begin{figure}[!t]
 \centering
 \includegraphics[width=\columnwidth]{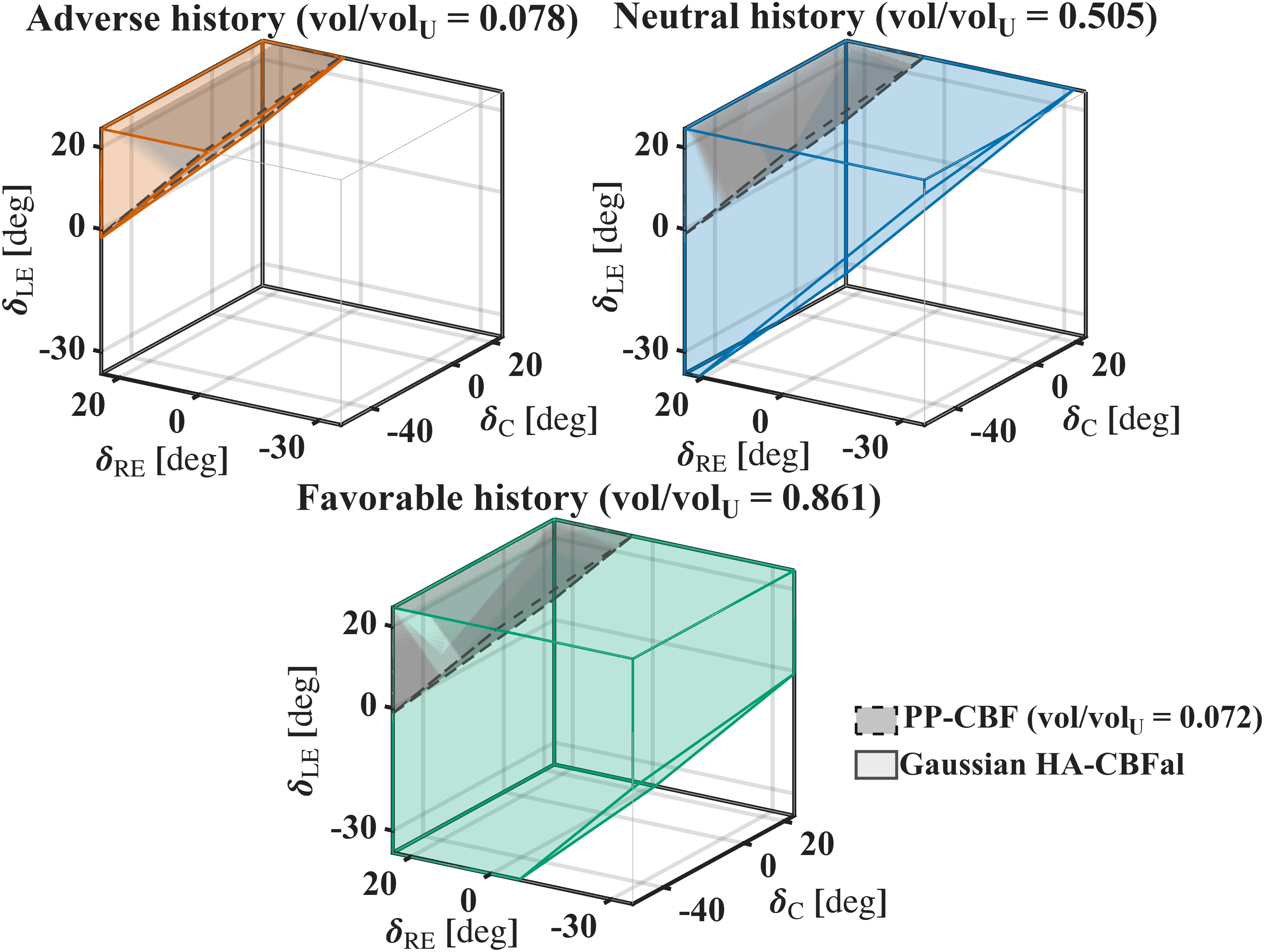}
 \caption{Three-dimensional instantaneous incidence-barrier input polytopes
 at the common symmetric checkpoint $x^\star$, intersected with the actuator
 box. Dashed: PP-CBF; solid: HA-CBFal. State, current command,
 and box are fixed; only the retained history differs. Equal elevon commands are used in the simulated trajectories.}
 \label{fig:polytopes}
\end{figure}

At the mean checkpoint $x^\star=(9.3709^\circ,0.15047~\mathrm{rad/s})$
($\alpha^\star=22.0709^\circ$ absolute), the independently tuned PP-CBF
set is compared with HA-CBFal sets ordered by
$\Lambda_G=\{-0.70008,-0.42244,-0.20990\}$ (Fig.~\ref{fig:polytopes}).
Normalized volumes $\{0.07763,0.50500,0.86112\}$ versus $0.07197$ for PP-CBF
are obtained by half-space intersection. By
Proposition~\ref{prop:history_parametrized_sets}, the sets are nested,
with the adverse set as their intersection.

\subsection{Calibration and Maneuver Tests}
\label{subsec:stress}

The original calibration set $\mathcal S$ is formed from twin-square peaks
$\{2.4,2.6,2.8,3.0\}$~g, recoveries $\{2,3,4\}$~s, and three histories
(36 cases); 1-s recovery is added for 48 cases. On each set, PP-CBF and
MA-CBFal mean RMSE is minimized subject to $h_\alpha,g_\theta\geq0$ and QP
feasibility throughout every trajectory; the HA-CBFal gain is retained.
After 48-case calibration, all parameters are frozen for 27 exploratory
tests outside both sets: nine additional twin-square, triple-square, and
sinusoidal profiles under three histories.

With the original calibration, both constraints are met by all filters
on $\mathcal S$; HA-CBFal RMSE is lower in every case (mean per-case reductions:
$27.2\%$ versus PP-CBF, $7.0\%$ versus MA-CBFal). With 48-case calibration,
both constraints are met by MA-CBFal throughout calibration, but the
reserve requirement is violated in 17 of the 27 tests despite $h_\alpha,\Phi\geq0$.
Both constraints are met by HA-CBFal to tolerance in all 27 tests.

Policy reuse is tested with Gaussian memory on $h_\alpha$ and
$h_\alpha+h_\alpha^3/a^2$ ($a=0.25$~rad), and uniform historical
weighting of $h_\alpha$ induced by $\kappa_U(s)=2\max\{1-s/T_U,0\}$,
$T_U=3\sqrt\pi$~s. A common $\theta_c\simeq0.2567$ is selected from
preparation data for admissible initialization, replacing $0.4$;
$\gamma$ and policy terms are recomputed analytically, with $\beta$
retained. Both constraints are met to tolerance in all 27 maneuvers per policy.

Reserve satisfaction is checked during MA-CBFal calibration, whereas only
its functional inequality is enforced online; further tuning or redesign
is therefore needed for the new violations. No HA-CBFal gain retuning is
required in these tests; joint preservation is certified under the
hypotheses of Theorem~\ref{thm:physical_transfer}.\footnote{Model and actuator
data, comparator calibration, reference-gain selection, and additional results:
\href{https://doi.org/10.5281/zenodo.23284445}{doi:10.5281/zenodo.23284445}.}

\section{Conclusion}

Prescribed reserve requirements and physical safety are jointly certified by an explicit
construction for integer-order plants under the stated conditions. The same
implication proof is applicable across admissible kernels without
finite-dimensional memory realization. Uniform history-blind enforcement is characterized by
the exact worst-history intersection. In ADMIRE, tracking is improved on
the calibration set, and both constraints are preserved in additional
maneuvers with fixed HA-CBFal gains. Calibration success is distinguished from
certified reserve enforcement by the observed MA-CBFal violations despite
functional nonnegativity. Nonlinear flight models and bounded-input
feasibility maintenance are left for future work.

\bibliographystyle{IEEEtran}
\bibliography{references}

@article{ames2017cbfqp,
  author  = {Ames, Aaron D. and Xu, Xiangru and Grizzle, Jessy W. and Tabuada, Paulo},
  title   = {Control Barrier Function Based Quadratic Programs for Safety Critical Systems},
  journal = {IEEE Trans. Autom. Control},
  volume  = {62},
  number  = {8},
  pages   = {3861--3876},
  year    = {2017}
}

@inproceedings{ames2019cbf,
  author    = {Ames, Aaron D. and Coogan, Samuel and Egerstedt, Magnus and Notomista, Gennaro and Sreenath, Koushil and Tabuada, Paulo},
  title     = {Control Barrier Functions: Theory and Applications},
  booktitle = {Proc. 18th Eur. Control Conf. (ECC)},
  pages     = {3420--3431},
  year      = {2019}
}

@inproceedings{lackey1992pitchmargin,
  author    = {J. Lackey and C. Hadfield},
  title     = {Pitch Control Margin at High Angle of Attack---Quantitative Requirements (Flight Test Correlation With Simulation Predictions)},
  booktitle = {Proc. 6th AIAA Biennial Flight Test Conf.},
  year      = {1992},
  note      = {{AIAA} Paper 92-4107}
}

@inproceedings{brandon1991dynamicstall,
  author    = {J. M. Brandon},
  title     = {Dynamic Stall Effects and Applications to High Performance Aircraft},
  booktitle = {AGARD Special Course on Aircraft Dynamics at High Angles of Attack: Experiments and Modelling},
  year      = {1991},
  note      = {{AGARD}-R-776}
}

@inproceedings{prajna2005delay,
  author    = {Prajna, Stephen and Jadbabaie, Ali},
  title     = {Methods for Safety Verification of Time-Delay Systems},
  booktitle = {Proc. 44th IEEE Conf. Decision Control Eur. Control Conf. (CDC-ECC)},
  pages     = {4348--4353},
  year      = {2005}
}

@inproceedings{orosz2019safety,
  author    = {Orosz, Gabor and Ames, Aaron D.},
  title     = {Safety Functionals for Time Delay Systems},
  booktitle = {Proc. Amer. Control Conf. (ACC)},
  pages     = {4374--4379},
  year      = {2019}
}

@inproceedings{ren2021razumikhin,
  author    = {Ren, Wei},
  title     = {Razumikhin-Type Control Lyapunov and Barrier Functions for Time-Delay Systems},
  booktitle = {Proc. 60th IEEE Conf. Decision Control (CDC)},
  pages     = {5471--5476},
  year      = {2021}
}

@article{ren2022safe,
  author  = {Ren, Wei and Jungers, Rapha{\"e}l M. and Dimarogonas, Dimos V.},
  title   = {Razumikhin and Krasovskii Approaches for Safe Stabilization},
  journal = {Automatica},
  volume  = {146},
  pages   = {110563},
  year    = {2022}
}

@article{kiss2023cbf,
  author  = {Kiss, A. K. and Molnar, T. G. and Ames, Aaron D. and Orosz, Gabor},
  title   = {Control Barrier Functionals: Safety-Critical Control for Time Delay Systems},
  journal = {Int. J. Robust Nonlinear Control},
  volume  = {33},
  number  = {12},
  pages   = {7282--7309},
  year    = {2023}
}

@article{zhu2023first,
  author  = {Zhu, Z. and Huang, P. and Zhang, X. and Chai, Y. and Song, Z.},
  title   = {First Attempt of Barrier Functions for {Caputo}'s Fractional-Order Nonlinear Dynamical Systems},
  journal = {Sci. China Inf. Sci.},
  volume  = {66},
  number  = {7},
  pages   = {179205},
  year    = {2023}
}

@article{zhu2025caputo,
  author  = {Zhu, Z. and Shen, B. and Yao, L. and Chai, Y. and Song, Z.},
  title   = {{Caputo} Barrier Functions and Their Applications to the Safety, Safety-and-Stability, and Input-to-State Safety of a Class of Fractional-Order Systems},
  journal = {Mathematics},
  volume  = {13},
  number  = {8},
  pages   = {1215},
  year    = {2025}
}

@article{sabatier2020continuous,
  author  = {Sabatier, Jocelyn},
  title   = {Fractional-Order Derivatives Defined by Continuous Kernels: Are They Really Too Restrictive?},
  journal = {Fractal Fract.},
  volume  = {4},
  number  = {3},
  pages   = {40},
  year    = {2020}
}

@book{GripenbergLondenStaffans1990,
  author    = {Gripenberg, Gustaf and Londen, Stig-Olof and Staffans, Olof},
  title     = {Volterra Integral and Functional Equations},
  publisher = {Cambridge Univ. Press},
  address   = {Cambridge, U.K.},
  year      = {1990}
}

@techreport{forssell2005admire,
  author      = {Forssell, Lars and Nilsson, Ulf},
  title       = {{ADMIRE}: The Aero-Data Model in a Research Environment, Version 4.0---Model Description},
  institution = {Swedish Defence Research Agency (FOI)},
  number      = {FOI-R--1624--SE},
  year        = {2005}
}

@phdthesis{harkegard2003,
  author  = {H{\"a}rkeg{\aa}rd, Ola},
  title   = {Backstepping and Control Allocation with Applications to Flight Control},
  school  = {Link{\"o}ping Univ.},
  address = {Link{\"o}ping, Sweden},
  number  = {820},
  year    = {2003}
}

@manual{abb2015umc,
  organization = {{ABB}},
  title = {{UMC100.3} Universal Motor Controller: Technical Description},
  year = {2015},
  month = jul,
  note = {Doc. 2CDC135032D0203, Sec. 4}
}

@article{jang2023viscoelastic,
  author = {Jang, Yongseok and Shaw, Simon},
  title = {A Priori Analysis of a Symmetric Interior Penalty Discontinuous {Galerkin} Finite Element Method for a Dynamic Linear Viscoelasticity Model},
  journal = {Comput. Methods Appl. Math.},
  volume = {23},
  number = {3},
  pages = {647--669},
  year = {2023},
  doi = {10.1515/cmam-2022-0201}
}

\end{document}